\newcommand{\C}{\mathbb{C}}
\newcommand{\B}{{\cal B}}
\newcommand{\pr}{{\rm pr}}
\newcommand{\A}{{\cal A}}
\newcommand{\D}{{\cal D}}
\renewcommand{\S}{{\cal S}}
\newcommand{\h}{{\cal H}}
\newcommand{\scal}[2]{\langle #1| #2\rangle}
\newcommand{\id}{{\rm id}}
\newcommand{\ot}{\otimes}
\newcommand{\la}{\lambda}
\DeclareMathOperator{\tr}{{\rm tr}}
\DeclareFontFamily{U}{mathx}{\hyphenchar\font45}
\DeclareFontShape{U}{mathx}{m}{n}{
      <5> <6> <7> <8> <9> <10>
      <10.95> <12> <14.4> <17.28> <20.74> <24.88>
      mathx10
      }{}
\DeclareSymbolFont{mathx}{U}{mathx}{m}{n}
\DeclareMathSymbol{\bigtimes}{1}{mathx}{"91}
\newcounter{mnotecount}[section]
\newtheorem{thr}{Theorem}
\newtheorem{lm}[thr]{Lemma}
\newtheorem{df}[thr]{Definition}
\numberwithin{equation}{section}
\numberwithin{thr}{section}
\begin{document}
\title{A modification of the projective construction of quantum states for field theories\footnote{This is an author-created copyedited version of an article accepted for publication in Journal of Mathematical Physics. The definitive publisher authenticated version is available online at  http://dx.doi.org/10.1063/1.4989550.}}
\author{Jerzy Kijowski$^1$, Andrzej Oko{\l}\'ow$^2$ }
\date{June 30, 2017}

\maketitle
\begin{center}
{\it 1. Center for Theoretical Physics of the Polish Academy of Sciences\\
Al. Lotnikow 32/46, 02-668 Warsaw, Poland\smallskip\\
kijowski@cft.edu.pl}\medskip\\

{\it  2. Institute of Theoretical Physics, Warsaw University\\ ul. Pasteura 5, 02-093 Warsaw, Poland\smallskip\\
oko@fuw.edu.pl}
\end{center}
\medskip

\begin{abstract}
The projective construction of quantum states for field theories may be flawed---in some cases the construction may possibly lead to spaces of quantum states which are ``too small'' to be used in quantization of field theories. Here we present a slight modification of the construction which is free from this flaw. 
\end{abstract}

\section{Introduction}

\subsection{Projective construction of quantum states and its possible flaw}

The projective construction of quantum states for field theories reads as follows \cite{kpt}. The point of departure for it is a phase space of a field theory. Usually this space is infinite dimensional. By an appropriate choice of a finite number of degrees of freedom (d.o.f.) from those constituting the phase space, one obtains a finite physical system $\lambda$. This system can be ``quantized'' by assigning to it a Hilbert space $\h_\lambda$ representing pure quantum states of the system and the space $\D_\lambda$ of all density operators on $\h_\la$ representing mixed quantum states. In order to obtain a space of quantum states for the field theory, one needs to define a family $\Lambda$ of finite physical systems which satisfies the following properties. Firstly, the physical systems altogether should encompass all the d.o.f. of the original phase space. Secondly, it should be possible to organize the systems into a {\em directed set} such that a system $\la'$ is bigger than or equal to a system $\la$, $\la'\geq\la$, if the system $\la$ is a {\it subsystem} of $\la'$. Thirdly, given system $\la'$ and its subsystem $\la$, it should be possible to project the mixed states of $\la'$ onto those of $\la$ (i.e. the elements of $\D_{\la'}$ onto the ones of $\D_\la$) via a partial trace $\pi_{\la\la'}$, which ``annihilates'' those quantum d.o.f. of $\la'$ that are lacking in $\la$. Finally, the set $\{\D_\la,\pi_{\la\la'}\}_{\la\in\Lambda}$ is required to be a {\em projective family}. If the family $\Lambda$ of physical systems satisfies all these requirements then one defines the space of quantum states for the field theory as the {\em projective limit} $\D$ of the projective family $\{\D_\la,\pi_{\la\la'}\}_{\la\in\Lambda}$. 

The idea of this construction is closely related to the fact that in every real experiment one can measure only a finite number of observables---every system $\la$ is meant to be built from d.o.f. which can be measured in an experiment. This fact excludes in practice the full knowledge of a state $\rho\in \D$ being a special net $(\rho_\la) $ of states such that $\rho_\la\in\D_\la$ since $\rho$ encompasses information about an infinite number of d.o.f.---what we can really know is only a state $\rho_\la$ for some $\la$ and therefore such a state should be seen as an {\it approximation} of the state $\rho$ \cite{kpt}. This means that when designing and carrying out an experiment and when analyzing its results one would actually work with a finite system $\la$ using quantum states in $\D_\la$ and quantum observables of the system. Thus the whole projective construction serves two purposes $(i)$ it makes it possible to compare two experiments corresponding to different finite systems---if one experiment corresponds to $\la$ and the other to $\la'$, then there exists a finite system $\la''\geq\la,\la'$ and this fact allows us to interpret the two original experiments as ones measuring observables of the same finite system $\la''$ \cite{proj-lt-I}, $(ii)$ it gives a precise meaning to the statement that a state $\rho_\la$ is an approximation of a state $\rho$ of the full quantum field theory.

In the original paper \cite{kpt} this idea was applied to linear phase spaces. Further development of the projective construction presented in \cite{non-comp,q-nonl,q-stat,proj-lt-II,proj-lqg-I} consisted in applying this idea to more and more general phase spaces including finally the one underlying Loop Quantum Gravity (LQG).

However, as realized in \cite{sl-phd} the projective construction is in general uncertain because of a troublesome feature of the very projective limit---the projective limit of a projective family may actually be the empty set. Let us emphasize the fact that all sets $\{\D_\la\}$ being non-empty and all partial traces $\{\pi_{\la\la'}\}$ being surjective does not guarantee that the corresponding projective limit $\D$ is non-empty, since there are known projective families with these properties which possess empty limits \cite{empty-0,empty}. Moreover, to use the limit $\D$ as the space of quantum states for a field theory one should not only prove that it is non-empty but also that it is ``large enough'' to describe all quantum d.o.f. of the theory. 

We do not yet know any projective family $\{\D_\la,\pi_{\la\la'}\}_{\la\in\Lambda}$, constructed according to the above prescription, which leads to ``too small'' a projective limit, but the lack of certainty that every space of quantum states provided by the projective construction is suitable for quantization is unsatisfactory since every quantum field theory based on an uncertain state space is also uncertain. 

Thus it is desirable to remove the possibility that spaces of projective quantum states may be ``too small''. So far two partial solutions to this problem are known---we will describe them in the next section and show that there are relevant applications of the projective construction for which neither of these solutions works. 

The goal of this paper is to present a general solution to the problem of ``too small'' spaces of projective quantum states. The solution will consist in a suitable modification of the hitherto construction---we will show that {\em every} family $\{\D_\la,\pi_{\la\la'}\}_{\la\in\Lambda}$ can be naturally extended to a projective family such that its projective limit is not only non-empty but is also ``sufficiently large'' to serve as the space of quantum states for the field theory for which the family $\{\D_\la,\pi_{\la\la'}\}_{\la\in\Lambda}$ is constructed.

\subsection{Partial solutions}

Suzanne  Lan\'ery and Thomas Thiemann showed in \cite{proj-lt-IV} that if a directed set $(\Lambda,\geq)$ of finite physical systems admits a {\em countable cofinal subset}\footnote{A directed subset $(\Lambda',\geq)$ of a directed set $(\Lambda,\geq)$ is called {\em cofinal} if for every $\la\in\Lambda$ there exists $\la'\in\Lambda'$ such that $\la'\geq\la$.}, then the corresponding projective limit $\D$ is ``large enough''.  

However, a directed set $(\Lambda,\geq)$ which does not admit any countable cofinal subset is nothing out of the ordinary: if one constructs projective quantum states for any {\em diffeomorphism invariant (background independent)} field theory with local d.o.f., then it is natural to use a set of this sort. The reason for this is that it would be very difficult to build projective quantum states on the basis of diffeomorphism invariant d.o.f. and therefore one has to use diffeomorphism covariant ones. And if one accepts {\em one} finite physical system $\la_0$ defined by such d.o.f., then one consequently accepts {\em all} systems obtained from $\la_0$  by the action of all diffeomorphisms otherwise there would be a risk of breaking diffeomorphism symmetry. Thus one ends up with a set $\Lambda$ which is uncountable. On the other hand since the relation $\geq$ is a relation system--subsystem, every cofinal subset of $\Lambda$ has to encompass all d.o.f. encompassed by $\Lambda$. But the set of finite systems obtained from $\la_0$ by the action of all diffeomorphisms encompasses an uncountable number of d.o.f. and therefore the set $\Lambda$ cannot possess any countable cofinal subset.

Restricting ourselves to physically relevant applications of the projective construction, we conclude that sets of finite systems which do not admit countable cofinal subsets appear in applications to $(i)$ background independent quantization of general relativity (GR) (vacuum or coupled to matter fields) and to $(ii)$ quantization of background independent toy-models (with local d.o.f.), which may be helpful in solving problems\footnote{A very important and difficult problem is the one of solving constraints on spaces of projective quantum states constructed for GR, and it is hard to imagine that this problem could be solved without the help of toy-models} encountered in quantization of GR. To be more precise: such sets are used in the following existing applications of the projective construction:
\begin{enumerate}
\item the application \cite{proj-lqg-I} by Lan\'ery and Thiemann to LQG;  
\item the application \cite{q-stat} to the Teleparallel Equivalent of General Relativity and a related toy-model \cite{os}.
\item the application  \cite{lqg-tens} to theories of tensor fields coupled to LQG;
\item the three different applications \cite{non-comp,q-nonl,constr-dpg} to a toy-model called  degenerate Pleba\'nski gravity. 
\end{enumerate}
Let us also note that the set of finite systems applied in the original construction \cite{kpt} for scalar field theories does not admit any countable cofinal subset since it is built from subsets of a three-dimensional space. Of course, if one used this construction to quantization of a scalar field on a fixed background then the set of finite systems could be restricted to a countable one, but in the case of quantization of GR coupled to a scalar field it would be necessary to keep the uncountable set intact.  

Regarding future applications in which this sort of sets of finite systems would be used, let us mention only one, namely a construction of projective quantum states for LQG expressed in terms of the complex Ashtekar variables \cite{a-var-1,a-var-2}. The gauge group of these variables is $SL(2,\C)$ and because of its non-compactness no one so far has been able to construct any acceptable quantum state space for this formulation of GR. The Hilbert space used in LQG was obtained in \cite{al-hoop} by breaking the $SL(2,\C)$ symmetry to the $SU(2)$ one (which amounts to breaking the Lorentz symmetry of GR to that of three-dimensional rotations) and in \cite{proj-lqg-I} Lan\'ery and Thiemann constructed their projective quantum states also for LQG with the $SU(2)$ symmetry. A wish to construct projective quantum states for LQG with the $SL(2,\C)$ symmetry was the main motivation for the paper \cite{non-comp}. It seems now that the Lan\'ery-Thiemann construction for LQG with the $SU(2)$ symmetry should work for LQG with the $SL(2,\C)$ symmetry as well and yield the desired space of projective states for the latter model. Of course, this space may be possibly empty or ``too small''. 
     
The second partial solution to the problem of ``too small'' spaces of projective quantum states was presented by Lan\'ery and Thiemann in \cite{proj-lqg-I}---they proved that if a projective family $\{\D_\la,\pi_{\la\la'}\}_{\la\in\Lambda}$ is constructed on the basis of so called holonomy-flux algebra for a theory of connections with a {\em compact} structure group, then its projective limit is ``sufficiently large''. However, the only theory of this sort among those mentioned above is LQG with the $SU(2)$ symmetry. 

Let us finally mention a potential general solution to the problem proposed by Lan\'ery and Thiemann in \cite{proj-lt-IV}. It consists in replacing an uncountable set of finite physical systems by a countable one of special properties: the countable set is required to be ``cofinal up to small deformations'' with respect to the original uncountable one.  Moreover, in the case of a diffeomorphism invariant theory every diffeomorphism has to be ``well approximated'' by automorphisms of the countable set, otherwise the diffeomorphism symmetry would be broken. These two requirements seem to be indispensable but on the other hand they make it very difficult to find any acceptable countable set of finite systems for any diffeomorphism invariant theory---Lan\'ery and Thiemann only managed to construct an example for a theory defined on an interval of the real line.

\section{Preliminaries}

\subsection{Projective limit}

Here we recall the notions of projective family and its projective limit following \cite{prof-gr}.

Let $(I,\geq)$ be a directed set. Consider a family $\{X_i,\varphi_{ii'}\}_{i\in I}$, where for every $i\in I$ $X_i$ is a set and $\{\varphi_{ii'}:X_{i'}\to X_i\}$ are maps defined for every $i',i\in I$ such that $i'\geq i$. If for every $i'',i',i\in I$ such that $i''\geq i'\geq i$
\[
\varphi_{ii'}\circ\varphi_{i'i''}=\varphi_{ii''},
\]
then $\{X_i,\varphi_{ii'}\}_{i\in I}$ is called a {\em projective family} (or an {\em inverse system}).

Suppose that $\{X_i,\varphi_{ii'}\}_{i\in I}$ is a projective family. Let $Y$ be a set equipped with a set of maps $\{\theta_i:Y\to X_i \}_{i\in I}$. We say that the maps are {\em compatible} with the projective family if for every $i'\geq i$
\[
\theta_i=\varphi_{ii'}\circ\theta_{i'}.
\]

Consider a set $X$ equipped with maps $\{\varphi_i:X\to X_i\}_{i\in I}$ compatible with a projective family $\{X_i,\varphi_{ii'}\}_{i\in I}$. If for every set $Y$ equipped with maps $\{\theta_i:Y\to X_i\}_{i\in I}$ compatible with the  family there exists a unique map $\theta:Y\to X$ such that
\[
\theta_i=\varphi_i\circ\theta,
\]
then $X$ is called a {\em projective limit} (or an {\em inverse limit}) of the projective family.

One can show  that for every projective family there exists a unique projective limit. Given projective family $\{X_i,\varphi_{ii'}\}_{i\in I}$, its limit $X$ can be described as follows:
\[
X=\{\ (x_i)\in \bigtimes_{i\in I} X_i \ | \ x_i=\varphi_{ii'}(x_{i'}) \ \text{for every $i'\geq i$} \ \}
\]
i.e. the limit consists of nets $(x_i)$ {\em compatible} with the maps $\{\varphi_{ii'}\}$, and for every $i\in I$
\[
X\ni(x_j)\mapsto\varphi_i((x_j))=x_i\in X_i.
\]

It can happen that for a projective family $\{X_i,\varphi_{ii'}\}_{i\in I}$ there are no nets compatible with the maps $\{\varphi_{ii'}\}$ even if all sets $\{X_i\}$ are non-empty and all maps $\{\varphi_{ii'}\}$ are surjective \cite{empty-0,empty}. Then $X$ is the empty set $\varnothing$ and each $\varphi_i:X=\varnothing\to X_i$ is an empty map. But then the empty set and the empty maps still satisfy the definition of the projective limit. Thus the projective limit of a projective family may be the empty set.

\subsection{Inductive limit of an inductive family of $C^*$-algebras}

Let $(I,\geq)$ be again a directed set. A family $\{\A_i,h_{i'i}\}_{i\in I}$ is an {\em inductive family} (or {\em a direct system}) of $C^*$-algebras if for every $i\in I$ $\A_i$ is a $C^*$-algebra, $\{h_{i'i}:\A_i\to \A_{i'}\}$ are injective $*$-homomorphisms of the algebras defined for every pair $i'\geq i$ and for every triplet $i''\geq i'\geq i$
\begin{equation}
h_{i''i'}\circ h_{i'i}=h_{i''i}.
\label{hhh}
\end{equation}

Let us define a relation on a disjoint union $\bigsqcup_{i\in I}\A_i$: we say that $a\in \A_i$ is in relation with $a'\in \A_{i'}$, $a\sim a'$, if for some $i''\geq i',i$
\[
h_{i''i}(a)=h_{i''i'}(a').
\]
It is easy to check that this is an equivalence relation. The following quotient space
\[
\A_0:=\Big(\bigsqcup_{i\in I}\A_i\Big)\Big/\!\!\sim
\]
will be called here an {\em algebraic inductive limit} of the family $\{\A_i,h_{i'i}\}_{i\in I}$. $\A_0$ is naturally a normed $*$-algebra: denoting by $[b]\in \A_0$ an equivalence class of $b\in\bigsqcup_{i\in I} \A_i$ one defines
\begin{align*}
z[a]+z'[a']&:=[zh_{i''i}(a)+z'h_{i''i'}(a')], & [a][a']&:=[h_{i''i}(a)h_{i''i'}(a')],\\
[a]^*&:=[a^*],& ||\,[a]\,||&:=||a||,
\end{align*}
where $a\in \A_i$, $a'\in \A_{i'}$, $z,z'\in \C$ and $i''\geq i',i$. These operations and the norm do not depend on the choice of $i''$ or the choice of representatives of the classes $[a]$, $[a']$  because the maps $\{h_{i'i}\}$ are injective $*$-homomorphisms which satisfy the condition \eqref{hhh} (an important fact here is that every {\em injective} $*$-homomorphism between $C^*$-algebras preserves the norms).

The completion
\[
\A:=\overline{\A_0}
\]
in the norm defined above is a $C^*$-algebra called an {\em inductive limit} (or a {\em direct limit}) of the inductive family  $\{\A_i,h_{i'i}\}_{i\in I}$.

There exists a natural injective $*$-homomorphism from every $\A_i$ into the inductive limit $\A$:
\begin{equation}
\A_i\ni a\mapsto [a]\in \A,
\label{hom}
\end{equation}
which allows us to treat each $\A_i$ as a $C^*$-subalgebra of $\A$.

\subsection{States on $C^*$-algebras}

Here we recall some basic facts concerning states on $C^*$-algebras \cite{brat-rob, araki}.

A linear functional $s$  on a $C^*$-algebra ${\cal A}$ valued in the complex numbers is positive if for every $a\in{\cal A}$  
\[
 s(a^*a)\geq 0.
\]
A positive linear functional $s$ of norm $||s||=1$ is a state on $\cal A$.  

\begin{thr}
Let $\cal A$ be a $C^*$-algebra. For every  $a\in\A$ there exists a state $s$ on $\A$ such that $s(a^*a)=||a||^2$.
\label{exists}
\end{thr}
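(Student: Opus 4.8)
The plan is to reduce the statement to a question about a single positive element and then manufacture the required functional by combining the Gelfand representation of a commutative subalgebra with the Hahn--Banach theorem. First I would set $b := a^*a$. This element is self-adjoint and positive, and the $C^*$-identity gives $\|b\| = \|a^*a\| = \|a\|^2$, so it suffices to exhibit a state $s$ on $\A$ with $s(b) = \|b\|$. If $a = 0$ the claim is trivial, since any state $s$ satisfies $s(0)=0$, so I may assume $a \neq 0$, whence $\|b\| > 0$.

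Next I would pass, if necessary, to the unitization $\tilde{\A}$ obtained by adjoining an identity $\mathbf{1}$, and consider the commutative unital $C^*$-subalgebra $\CC \subseteq \tilde{\A}$ generated by $b$ and $\mathbf{1}$. By Gelfand theory $\CC$ is isometrically $*$-isomorphic to $C(\sigma(b))$, the algebra of continuous functions on the spectrum of $b$, with $b$ corresponding to the identity function. Because $b$ is self-adjoint, its spectral radius equals its norm, and since $b$ is positive one has $\sigma(b)\subseteq[0,\infty)$; hence $\|b\| \in \sigma(b)$. Evaluation at the point $\|b\|$ is then a character $\omega$ of $\CC$, satisfying $\omega(\mathbf{1}) = 1$, $\|\omega\| = 1$ and $\omega(b) = \|b\|$, so $\omega$ is a state on $\CC$.

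I would then extend $\omega$ from $\CC$ to all of $\tilde{\A}$. By the Hahn--Banach theorem $\omega$ admits a linear extension $s$ to $\tilde{\A}$ with $\|s\| = \|\omega\| = 1$. The crucial point is that this norm-preserving extension is automatically positive: for a bounded functional on a unital $C^*$-algebra the conditions $\|s\| = 1$ and $s(\mathbf{1}) = 1$ together force $s(c^*c) \geq 0$ for every $c$. Granting this, $s$ is a state on $\tilde{\A}$ with $s(b) = \|b\| = \|a\|^2$.

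Finally I would restrict $s$ to $\A$. The restriction $s|_\A$ is a positive functional of norm at most $1$, and since $b \in \A$ satisfies $|s(b)| = \|b\| > 0$ one gets $\|s|_\A\| \geq |s(b)|/\|b\| = 1$, so $\|s|_\A\| = 1$ and $s|_\A$ is a genuine state on $\A$. The main obstacle is the positivity lemma invoked in the third step: establishing that a Hahn--Banach extension normalized by $s(\mathbf{1}) = \|s\|$ must be positive requires a separate argument, showing that such an $s$ sends self-adjoint elements to real numbers and positive elements to nonnegative ones. Everything else---the reduction to $b$, the identification $\|b\| \in \sigma(b)$, and the norm computation on restriction---is routine.
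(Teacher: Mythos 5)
Your argument is correct, but note that the paper does not prove this statement at all: Theorem \ref{exists} is recalled as a standard fact and attributed to the textbooks of Bratteli--Robinson and Araki, so there is no in-paper proof to compare against. What you have written is precisely the standard textbook argument from those sources: reduce to the positive element $b=a^*a$ with $\|b\|=\|a\|^2$, use Gelfand theory on the commutative unital subalgebra generated by $b$ to find a character with $\omega(b)=\|b\|$ (valid because $\sigma(b)\subseteq[0,\infty)$ and the spectral radius of a self-adjoint element equals its norm), extend by Hahn--Banach, and invoke the criterion that $\|s\|=s(\mathbf{1})=1$ forces positivity. Two small remarks. First, the extension step could be shortened by citing Theorem \ref{ext} of the same section (extension of states from a $C^*$-subalgebra), which already packages the Hahn--Banach argument together with the positivity lemma you flag as the ``main obstacle''; as written, you are essentially re-proving that theorem in the special case $\CC\subseteq\tilde{\A}$. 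Second, your treatment of $a=0$ quietly assumes that $\A$ admits at least one state, which fails only for the zero algebra; this is harmless (and in the paper's application the algebras are $\B(\h)$ and their inductive limits, all nonzero), but strictly speaking the trivial case should either exclude $\A=\{0\}$ or be absorbed into the main argument by applying it to any nonzero element. Neither point affects the validity of the proof.
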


\begin{thr}
Let ${\cal A}'$ be a $C^*$-subalgebra  of a $C^*$-algebra ${\cal A}$ and let $ s'$ be a state on ${\cal A}'$. Then there exists a state $ s$ on $\cal A$  such that its restriction to ${\cal A}'$  is the state $ s'$.
\label{ext}
\end{thr}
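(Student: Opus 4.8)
The natural tool is the Hahn--Banach theorem, which extends a bounded linear functional from a subspace to the whole space without increasing its norm. The plan is to route the argument through the characterization of states among bounded functionals: \emph{on a unital $C^*$-algebra $\B$ with unit $1$, a bounded linear functional $s$ with $\|s\|=s(1)=1$ is automatically positive, hence a state.} Granting this, any Hahn--Banach extension that preserves both the norm and the value on the unit is automatically a state extension, so the whole problem reduces to arranging a unit that lies in the relevant subalgebra.

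I would first establish that characterization. To show $s$ is real on self-adjoint elements, for $a=a^*$ I write $s(a)=\alpha+i\beta$ and combine $\|a+it\,1\|^2=\|a^2+t^2 1\|\le\|a\|^2+t^2$ with $|s(a+it\,1)|\le\|a+it\,1\|$; the resulting inequality $\alpha^2+(\beta+t)^2\le\|a\|^2+t^2$ for all real $t$ forces $\beta=0$. Then, for $b$ with $0\le b\le 1$, the estimate $|1-s(b)|=|s(1-b)|\le\|1-b\|\le 1$ together with $s(b)\in\R$ gives $s(b)\ge 0$; rescaling yields positivity on all of $\B$.

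With the characterization available, the plan is to reduce to the situation where the ambient algebra is unital and its unit lies in the subalgebra, in two steps. If $\A$ is non-unital, I pass to the unitization $\tilde{\A}=\A\oplus\C 1$, in which $\A'$ is still a $C^*$-subalgebra; once a state extension is found on $\tilde{\A}$, its restriction to $\A$ is positive and has norm $1$, since the norm is $\le 1$ trivially and $\ge\|s'\|=1$ because the restriction dominates $s'$ on $\A'\subseteq\A$. So I may assume $\A$ is unital. Because the $C^*$-subalgebra $\A'$ need not contain $1$, I enlarge it to $\B=\A'+\C 1$ and first extend $s'$ to $\B$ by $s_0(a+\lambda 1):=s'(a)+\lambda$ (well defined since $1\notin\A'$ in the nontrivial case). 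To check $s_0$ is a state I note $s_0(1)=1$ and bound its norm with an approximate unit $(e_\mu)$ of $\A'$: for $x=a+\lambda 1$ one has $x e_\mu\in\A'$ and $s'(x e_\mu)\to s'(a)+\lambda=s_0(x)$ (using $s'(e_\mu)\to 1$), while $|s'(x e_\mu)|\le\|x e_\mu\|\le\|x\|$, so $\|s_0\|\le 1$; with $s_0(1)=1$ the characterization makes $s_0$ a state on $\B$. Finally I apply Hahn--Banach to extend $s_0$ to a bounded functional $s$ on $\A$ with $\|s\|=1$; since $1\in\B$ we have $s(1)=1$, so $s$ is a state, and $s|_{\A'}=s'$ by construction.

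The main obstacle will not be the Hahn--Banach step, which is automatic, but the passage from a norm-preserving \emph{linear} extension to a genuine \emph{state}: everything hinges on the positivity characterization $\|s\|=s(1)=1$ and on correctly handling the two independent ways the unit can be absent---either $\A$ or the subalgebra $\A'$ may lack it. The one genuinely technical point is the approximate-unit computation showing that adjoining the unit to $\A'$ does not inflate the functional's norm; the existence of approximate units and the complex Hahn--Banach theorem are standard and may be taken from the references already cited.
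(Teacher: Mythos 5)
The paper does not actually prove Theorem \ref{ext}: it is recalled as a standard fact with a pointer to the references \cite{brat-rob,araki}. Your argument is correct and is essentially the textbook proof found there --- the characterization that a functional with $\|s\|=s(1)=1$ on a unital $C^*$-algebra is automatically a state, combined with Hahn--Banach, the unitization step, and the approximate-unit estimate for adjoining the unit to ${\cal A}'$ --- the only ingredient you invoke without proof being the equally standard identity $\lim_\mu s'(e_\mu)=\|s'\|$ for positive functionals.
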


\begin{lm}
Let $\cal A$ be a $C^*$-algebra with a unit $\mathbf{1}$. Then for every state $s$ on $A$
\begin{equation*}
s(\mathbf{1})=1.
\end{equation*}
\label{s-1}
\end{lm}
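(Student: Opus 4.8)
The plan is to prove the equivalent, slightly stronger statement that $\|s\|=s(\mathbf{1})$ for every positive linear functional $s$ on a unital $C^*$-algebra $\A$; the lemma then follows at once, since a state satisfies $\|s\|=1$ by definition. I would organize the argument around the two inequalities $s(\mathbf{1})\le\|s\|$ and $\|s\|\le s(\mathbf{1})$.

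First I would record the easy half. The $C^*$-identity gives $\|\mathbf{1}\|=\|\mathbf{1}^*\mathbf{1}\|=\|\mathbf{1}\|^2$, so the unit has norm $\|\mathbf{1}\|=1$. Positivity applied to $\mathbf{1}=\mathbf{1}^*\mathbf{1}$ shows that $s(\mathbf{1})\ge 0$ is a nonnegative real number, and the definition of the functional norm gives $s(\mathbf{1})=|s(\mathbf{1})|\le\|s\|\,\|\mathbf{1}\|=\|s\|$. This already yields $s(\mathbf{1})\le 1$.

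Next I would establish that $s$ is Hermitian, i.e. $s(a^*)=\overline{s(a)}$. The clean route is to observe that $(a,b)\mapsto s(b^*a)$ is a positive semidefinite sesquilinear form on $\A$, and that any such form over $\C$ is automatically conjugate-symmetric (this is the polarization identity, since the form is determined by its nonnegative real diagonal values $s(a^*a)$). Setting $b=\mathbf{1}$ in the conjugate-symmetry relation $s(b^*a)=\overline{s(a^*b)}$ gives exactly $s(a)=\overline{s(a^*)}$. For the reverse inequality I would then exploit positivity on the element $(\mathbf{1}-b)^*(\mathbf{1}-b)$. Expanding,
\[
0\le s\big((\mathbf{1}-b)^*(\mathbf{1}-b)\big)=s(\mathbf{1})-s(b)-s(b^*)+s(b^*b),
\]
and using the Hermitian property together with the norm bound $s(b^*b)\le|s(b^*b)|\le\|s\|\,\|b\|^2$, I would restrict to elements with $\|b\|\le 1$ and rotate the phase of $b$ so that $s(b)=|s(b)|$ is real and nonnegative. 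The inequality then reads $2\,s(b)\le s(\mathbf{1})+1$, and taking the supremum over all such $b$ — noting $\sup_{\|b\|\le 1}|s(b)|=\|s\|=1$ — gives $2\le s(\mathbf{1})+1$, i.e. $s(\mathbf{1})\ge 1$. Combined with $s(\mathbf{1})\le 1$ this forces $s(\mathbf{1})=1$.

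The main obstacle, and the step most easily overlooked, is the reverse inequality $\|s\|\le s(\mathbf{1})$: positivity bounds $s(\mathbf{1})$ from above trivially, but pushing the estimate the other way requires using both the Hermitian property of $s$ and the supremum characterization of the norm in tandem, so one must not forget to justify $s(a^*)=\overline{s(a)}$. Everything else is routine bookkeeping with the $C^*$-identity and the definition of a state.
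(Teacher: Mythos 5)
Your argument is correct. Note that the paper offers no proof of this lemma at all---it is stated in the preliminaries as a recalled textbook fact, with the references to Bratteli--Robinson and Araki standing in for a proof---so there is nothing internal to compare against; what you have written is a legitimate self-contained derivation of the standard identity $\|s\|=s(\mathbf{1})$ for positive linear functionals on a unital $C^*$-algebra, of which the lemma is the special case $\|s\|=1$. All the steps check out: $\|\mathbf{1}\|=1$ from the $C^*$-identity (implicitly assuming $\A\neq\{0\}$), the easy bound $s(\mathbf{1})\le\|s\|\,\|\mathbf{1}\|$, conjugate symmetry of the positive sesquilinear form $(a,b)\mapsto s(b^*a)$ via polarization giving $s(a^*)=\overline{s(a)}$, and the reverse bound obtained by expanding $s\bigl((\mathbf{1}-b)^*(\mathbf{1}-b)\bigr)\ge 0$, rotating the phase of $b$, and taking the supremum over the unit ball. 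For what it is worth, the more common textbook route to $\|s\|\le s(\mathbf{1})$ is shorter and bypasses the Hermitian step entirely: the Cauchy--Schwarz inequality for the same form gives
\begin{equation*}
|s(b)|^2=|s(\mathbf{1}^*b)|^2\le s(\mathbf{1})\,s(b^*b)\le s(\mathbf{1})\,\|s\|\,\|b\|^2,
\end{equation*}
whence $\|s\|^2\le s(\mathbf{1})\,\|s\|$ and $\|s\|\le s(\mathbf{1})$. Your expansion argument buys essentially the same conclusion at the cost of the extra polarization lemma, but it is equally valid.
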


Denote by $\B(\h)$ the $C^*$-algebra of all bounded operators on a {\em separable} Hilbert space $\h$. We say that $b\in\B(\h)$ is positive if for every $\psi\in\h$
\[
\scal{\psi}{b\psi}\geq 0
\]
(here $\scal{\cdot}{\cdot}$ is the scalar product on $\h$). Consider two operators $a,a'\in\B(\h)$. We say that $a'$ is larger than or equal to $a$, $a'\geq a$, if the difference $a'-a$ is a positive operator. A net $(a_i)$ labeled by elements of a directed set $I$ is a bounded increasing net if $i'\geq i$ implies $a_{i'}\geq a_i$ and the net $(||a_i||)$ of real numbers is bounded. An operator $b\in \B(\h)$ is an upper bound of a net $(a_i)$ if for every $i\in I$ $b\geq a_i$. If the net is increasing and bounded then there exist upper bounds for it. Then there also exists the least upper bound of the net i.e. an upper bound $c$ such that for every upper bound $b$ of the net $b\geq c$. Therefore the operator $c$ is denoted by ${\rm sup\,} (a_i)$.

We say that a state $\rho$ on $\B(\h)$ is {\em normal} if for every bounded increasing net $(a_i)$ of positive operators in $\B(\h)$
\[
\rho({\rm sup\,} (a_i))={\rm sup\,}\rho(a_i).
\]
There are states on $\B(\h)$ which are not normal. The set of all normal states is $*$-weak dense in the set of all states on $\B(\h)$ i.e. for every state $ s$ on $\B(\h)$ there exists a sequence $( \rho_n)$ of normal states on the algebra such that for every $a\in\B(\h)$
\[
\lim_{n\to\infty} \rho_n(a)= s(a).
\]

A {\em density operator} $\tilde{\rho}$ on $\h$ is a positive trace class operator of trace equal $1$. There is one-to-one correspondence between density operators on $\h$ and normal states on the algebra $\B(\h)$---each density operator $\tilde{\rho}$ defines such a state $\rho$ via the following formula
\[
\B(\h)\ni a\mapsto \rho(a):=\tr(a\tilde{\rho})\in \C,
\]
on the other hand for each normal state $\rho$ there exists a unique density operator $\tilde{\rho}$ such that the relation above is satisfied. Because of this correspondence in this paper we will not distinguish between density operators on $\h$ and normal states on $\B(\h)$.

\section{Projective constructions of spaces of quantum states for field theories}

In this section we will present the modified projective construction---if only this construction can be successfully applied to a field theory then it provides a ``sufficiently large'' space of quantum states. Before that we will describe some details of the (possibly flawed) hitherto construction which will be necessary for the presentation of the modified one.

In what follows we will apply the most advanced formulation of the hitherto construction elaborated by Lan\'ery and Thiemann in \cite{proj-lt-II} since it is best suited for our goal. Moreover, this formulation encompasses all earlier applications of the construction. However the reader should remember that the problem of ``too small'' spaces of projective quantum states is not specific merely to the Lan\'ery-Thiemann formulation but appears also in the earlier papers including the original one \cite{kpt}.

\subsection{Family of factorized Hilbert spaces}

In order to obtain a projective family $\{\D_\la,\pi_{\la\la'}\}_{\la\in\Lambda}$ from a directed set $\Lambda$ of finite physical systems defined over a phase space the set has to be chosen in a proper way. A criterion for such a choice can be formulated as follows \cite{proj-lt-II}: the set $\Lambda$ is chosen properly if the family $\{\h_\la\}_{\la\in\Lambda}$ of the Hilbert spaces associated with the systems is extendable to a richer structure  which will be called here a {\em  family of factorized Hilbert spaces}:

\begin{df}
A family of factorized Hilbert spaces is a quintuplet
\[
\Big(\Lambda,\h_\lambda,\tilde{\h}_{\lambda'\lambda},\Phi_{\lambda'\lambda},\Phi_{\lambda''\lambda'\lambda}\Big)
\]
such that:
\begin{enumerate}
\item $\Lambda$ is a directed set,
\item for every $\lambda\in\Lambda$ $\h_\lambda$ is a separable Hilbert space,
\item for every $\lambda'\geq \lambda$ $\tilde{\h}_{\lambda'\lambda}$ is a Hilbert space, and
\begin{equation}
\Phi_{\lambda'\lambda}:\h_{\lambda'}\to\tilde{\h}_{\lambda'\lambda}\ot\h_\lambda
\label{Phi}
\end{equation}
a Hilbert space isomorphism (for other pairs $(\lambda',\lambda)$ $\tilde{\h}_{\lambda'\lambda} $ and $\Phi_{\lambda'\lambda}$ are not defined); moreover $\dim\h_{\lambda\lambda}=1$ and $\Phi_{\lambda\lambda}$ is trivial\footnote{Assume that $\h'$ is a one dimensional Hilbert space.  A Hilbert space isomorphism $\Phi:\h\to\h'\ot\h$ is trivial if there exists a normed element $e$ of $\h'$ such that $\Phi(\psi)=e\ot\psi$ for every $\psi\in\h$.}.
\item for every $\lambda''\geq\lambda'\geq \lambda$
\[
\Phi_{\lambda''\lambda'\lambda}:\tilde{\h}_{\lambda''\lambda}\to\tilde{\h}_{\lambda''\lambda'}\ot\tilde{\h}_{\lambda'\lambda}
\]
is a Hilbert space isomorphism such that the following diagram
\[
\begin{CD}
\h_{\lambda''} @>\Phi_{\lambda''\lambda}>> \tilde{\h}_{\lambda''\lambda}\ot\h_\lambda\\
@VV\Phi_{\lambda''\lambda'}V       @VV\Phi_{\lambda'' \lambda'\lambda}\ot\id V \\
\tilde{\h}_{\lambda''\lambda'}\ot\h_{\lambda'}@>\id\ot\Phi_{\lambda'\lambda}>>  \tilde{\h}_{\lambda''\lambda'}\ot\tilde{\h}_{\lambda'\lambda}\ot\h_\lambda
\end{CD}
\]
is commutative (for other triplets $(\lambda'',\lambda',\lambda)$ $\Phi_{\lambda''\lambda'\lambda}$ are not defined); moreover if $\lambda''=\lambda'$ or $\lambda'=\lambda$ then $\Phi_{\lambda''\lambda'\lambda}$ is trivial. 
\end{enumerate}
\label{ffHs}
\end{df}
\noindent In fact, what may really be difficult in a construction of a family $\{\D_\la,\pi_{\la\la'}\}_{\la\in\Lambda}$ for a given phase space is the construction of the family of factorized Hilbert spaces---as we will see in Section \ref{hit}, once the latter construction is done the former one follows straightforwardly.

The notion of family of factorized Hilbert spaces was introduced in \cite{proj-lt-II} and the application of the projective construction to LQG with the $SU(2)$ symmetry presented in \cite{proj-lqg-I} is based on this notion. However, as we will show in Appendix \ref{ffHs-old}, spaces of projective quantum states constructed in the earlier papers \cite{kpt,q-nonl,q-stat} can also be seen as derived from families of factorized Hilbert spaces.

\begin{figure}[h]
\psfrag{h1}{$\h_\la$}
\psfrag{h2}{$\h_{\la'}$}
\psfrag{h3}{$\h_{\la''}$}
\psfrag{h4}{$\tilde{\h}_{\la'\la}$}
\psfrag{h5}{$\tilde{\h}_{\la''\la'}$}
\psfrag{h6}{$\tilde{\h}_{\la''\la}$}
\psfrag{l1}{$\la$}
\psfrag{l2}{$\la'$}
\psfrag{l3}{$\la''$}
\psfrag{dof}{d.o.f.}
\begin{center}
\includegraphics[scale=1.27]{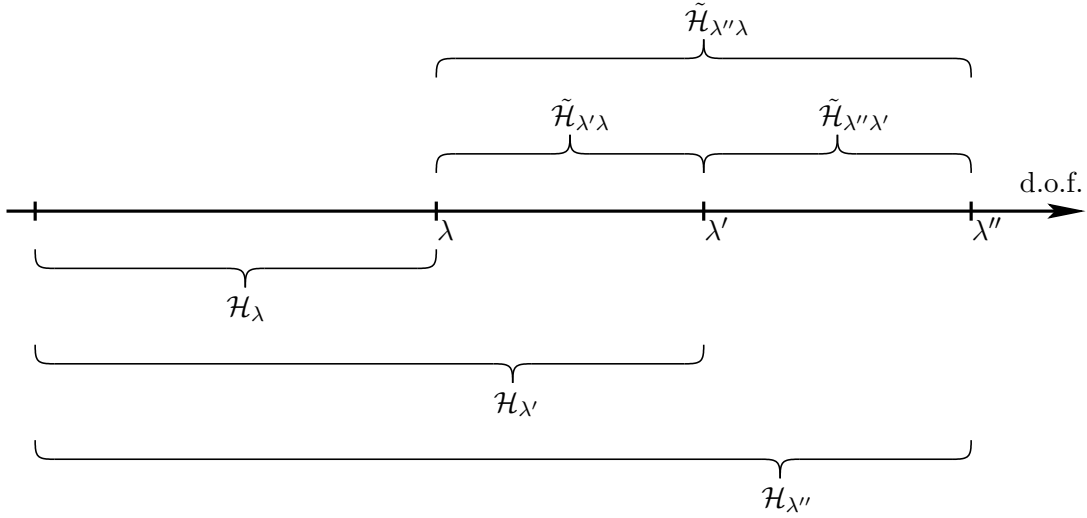}
\end{center}
\caption{A family of factorized Hilbert spaces---quantum d.o.f. of finite systems $\la''\geq\la'\geq\la$.}
\end{figure}

Let us emphasize that the notion defined by Definition \ref{ffHs} was called in \cite{proj-lt-II} a {\em projective system of quantum states spaces} but in our opinion this name is a bit misleading and imprecise. The term ``projective system'' used in \cite{proj-lt-II} is very similar to ``projective family'' (which is often called ``inverse system''), while the notion is certainly not very similar to a projective family. Moreover, ``quantum states space'' may mean a Hilbert space or a space of density operators. In fact, the name ``projective system of quantum states spaces'' suits much better every projective family $\{\D_\la,\pi_{\la\la'}\}_{\la\in\Lambda}$. This is why we decided to change the name---this new name does not suggest any similarity to the notion of projective family and makes it clear what sort of quantum state spaces is actually meant. It should also be clear that the participle ``factorized'' refers to the fact that each $\h_{\la'}$ is factorized as $\tilde{\h}_{\la'\la}\ot \h_\la$.

Let us finally note that there is a slight difference between the original definition in \cite{proj-lt-II} and the present one i.e. Definition \ref{ffHs}: unlike in \cite{proj-lt-II} we restrict ourselves to {\em separable} Hilbert spaces $\{\h_\la\}$ since they are Hilbert spaces of finite physical systems.

\subsection{The hitherto construction \label{hit}}

Assume that $\big(\Lambda,\h_\lambda,\tilde{\h}_{\lambda'\lambda},\Phi_{\lambda'\lambda},\Phi_{\lambda''\lambda'\lambda}\big)$ is a family of factorized Hilbert spaces. Let $\B_\lambda$ and $\tilde{\B}_{\lambda'\lambda}$ be the $C^*$-algebras of all bounded operators on, respectively, $\h_\lambda$ and $\tilde{\h}_{\lambda'\lambda}$. Denote by $\mathbf{1}_{\lambda'\lambda}$ the unit element of $\tilde{\B}_{\lambda'\lambda}$ and for every $\lambda'\geq\lambda$ define \cite{proj-lt-II}
\begin{equation}
\B_\lambda\ni a\mapsto \iota_{\lambda'\lambda}(a):=\Phi^{-1}_{\la'\la}\circ(\mathbf{1}_{\la'\la}\ot a)\circ\Phi_{\la'\la}\in\B_{\la'}.
\label{iota}
\end{equation}
Clearly, the map $\iota_{\la'\la}$ is a unital\footnote{A homomorphism from a unital algebra $\cal A$ to a unital algebra ${\cal A}'$  is unital if it maps the unit of $\cal A$ to the unit of ${\cal A}'$.} injective $*$-homomorphism. As a direct consequence of point 4 of Definition \ref{ffHs} for every triplet $\la''\geq\la'\geq\la$ \cite{proj-lt-II}
\begin{equation}
\iota_{\lambda''\lambda'}\circ\iota_{\lambda'\lambda}=\iota_{\lambda''\lambda}.
\label{ttt}
\end{equation}
Consequently, $\{\B_\la,\iota_{\la'\la}\}_{\la\in\Lambda}$ is an inductive family of $C^*$-algebras \cite{kpt}.

Let $\la'\geq\la$. Consider a normal state $\rho\in \D_{\la'}$ on $\B_{\la'}$. Then
\[
\B_\la\ni a\mapsto\rho(\iota_{\la'\la}(a))\in \C
\]
is a state on $\B_\la$. Taking into account the form of the map $\iota_{\la'\la}$ we see that the pull-back
\[
\rho\mapsto\iota^*_{\la'\la}(\rho)
\]
corresponds to $(i)$ mapping $\rho$ by means of $\Phi_{\la'\la}$ to a normal state on a $C^*$-algebra of all bounded operators on $\tilde{\h}_{\la'\la}\ot\h_\la$ (i.e. to a density operator on $\tilde{\h}_{\la'\la}\ot\h_\la$)  and then $(ii)$ to mapping the resulting state by means of the partial trace with respect to $\tilde{\h}_{\la'\la}$ to a normal state on $\B_{\la}$ (i.e. to a density operator on $\h_{\la}$). Thus the map
\begin{equation}
\pi_{\la\la'}:=\iota^*_{\la'\la}
\label{pi}
\end{equation}
is a surjection from $\D_{\la'}$ onto $\D_\la$. It follows from \eqref{ttt} that for every triplet $\la''\geq\la'\geq\la$
\[
\pi_{\la\la'}\circ\pi_{\la'\la''}=\pi_{\la\la''},
\]
which means that $\{\D_\la,\pi_{\la\la'}\}_{\la\in\Lambda}$ is a projective family.

As stated in the introduction the projective limit $\D$ of the family $\{\D_\la,\pi_{\la\la'}\}_{\la\in\Lambda}$ is meant to serve as  the space of quantum states for a field theory underlying the family of factorized Hilbert spaces and it is this space which may be possibly ``too small'' or empty. 

\subsection{The modified construction}

The point of departure for the modified construction  is the same family of factorized Hilbert spaces. Let us denote by $\S_\la$ the space of {\em all} states on $\B_\la$. For every $\la'\geq\la$ the map
\[
\Pi_{\la\la'}:=\iota^*_{\la'\la}
\]
is a map from $\S_{\la'}$ into $\S_\la$---using Theorem \ref{ext} one can show that the map is surjective. Again, by virtue of \eqref{ttt} for every triplet $\la''\geq\la'\geq\la$
\[
\Pi_{\la\la'}\circ\Pi_{\la'\la''}=\Pi_{\la\la''},
\]
which means that $\{\S_\la,\Pi_{\la\la'}\}_{\la\in\Lambda}$ is a projective family.

Since every space $\D_\la$ is a proper subset of $\S_\la$ and for every $\la'\geq\la$
\[
\Pi_{\la\la'}\big|_{\D_{\la'}}=\pi_{\la\la'}
\]
the new family $\{\S_\la,\Pi_{\la\la'}\}_{\la\in\Lambda}$ can be seen as a natural extension of the original family $\{\D_\la,\pi_{\la\la'}\}_{\la\in\Lambda}$.

\begin{lm}
The projective limit $\S$ of $\{S_\la,\Pi_{\la\la'}\}_{\la\in\Lambda}$ is non-empty.
\end{lm}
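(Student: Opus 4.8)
The plan is to exploit the duality between the inductive family $\{\B_\la,\iota_{\la'\la}\}_{\la\in\Lambda}$ of $C^*$-algebras and the projective family $\{\S_\la,\Pi_{\la\la'}\}_{\la\in\Lambda}$ of their state spaces. Let $\A$ denote the inductive limit of $\{\B_\la,\iota_{\la'\la}\}_{\la\in\Lambda}$, with the natural injective $*$-homomorphisms $\B_\la\ni a\mapsto[a]\in\A$ of the form \eqref{hom}. The idea is that a single state $s$ on $\A$ produces a whole net of states on the algebras $\B_\la$ by restriction, and that this net is automatically compatible with the maps $\{\Pi_{\la\la'}\}$, hence furnishes the sought-for element of $\S$.

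First I would record that the embeddings into $\A$ are coherent: for every $\la'\geq\la$ and every $a\in\B_\la$ one has $[a]=[\iota_{\la'\la}(a)]$ in $\A$. This is immediate from the definition of the equivalence relation $\sim$, since $a\in\B_\la$ and $\iota_{\la'\la}(a)\in\B_{\la'}$ already agree in $\B_{\la'}$, i.e. $\iota_{\la'\la'}(\iota_{\la'\la}(a))=\iota_{\la'\la}(a)$ with $\iota_{\la'\la'}=\id$. Next, since each $\iota_{\la'\la}$ is unital, the units $\mathbf{1}_\la$ are all identified in $\A$ to a common element $\mathbf{1}:=[\mathbf{1}_\la]$, which is a unit for $\A$; in particular $\A\neq\{0\}$. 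By Theorem \ref{exists}, applied for instance to $a=\mathbf{1}$, there exists a state $s$ on $\A$.

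Now set $s_\la:=s|_{\B_\la}$, that is $s_\la(a):=s([a])$ for $a\in\B_\la$. Each $s_\la$ is positive because $s$ is, and $s_\la(\mathbf{1}_\la)=s(\mathbf{1})=1$ by Lemma \ref{s-1}; since a positive functional on a unital $C^*$-algebra has norm equal to its value on the unit, $\|s_\la\|=1$ and therefore $s_\la\in\S_\la$. Compatibility follows from the coherence of the embeddings: for $\la'\geq\la$ and $a\in\B_\la$,
\[
\Pi_{\la\la'}(s_{\la'})(a)=s_{\la'}(\iota_{\la'\la}(a))=s([\iota_{\la'\la}(a)])=s([a])=s_\la(a),
\]
so that $s_\la=\Pi_{\la\la'}(s_{\la'})$. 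Hence $(s_\la)$ is a net compatible with the maps $\{\Pi_{\la\la'}\}$, i.e. an element of the projective limit $\S$, which is therefore non-empty.

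There is no genuinely hard obstacle here: the construction is essentially forced once one recognizes the inductive-limit/state-space duality. The only points needing care are the identity $[a]=[\iota_{\la'\la}(a)]$, which is what glues the restrictions into a compatible net, and the elementary fact that the restriction of a state to a unital $C^*$-subalgebra is again a state; both are routine given the machinery of Sections~2.2 and~2.3. In fact the same argument shows a little more, namely that $s\mapsto(s|_{\B_\la})_{\la\in\Lambda}$ is a bijection from the states on $\A$ onto $\S$, so that the non-emptiness of $\S$ is equivalent to $\A$ being nonzero.
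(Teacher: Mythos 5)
Your proposal is correct and follows essentially the same route as the paper: take a state $s$ on the inductive limit algebra (guaranteed by Theorem \ref{exists}), restrict it to each $\B_\la$ via $s_\la(a):=s([a])$, use unitality and Lemma \ref{s-1} to confirm $\|s_\la\|=1$, and verify compatibility from $[\iota_{\la'\la}(a)]=[a]$. Your closing remark that $s\mapsto(s_\la)$ is in fact a bijection onto $\S$ is precisely the content of the paper's subsequent lemma.
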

\begin{proof}
Denote by $\B$ a $C^*$-algebra defined as the inductive limit of the inductive family $\{\B_\la,\iota_{\la'\la}\}_{\la\in \Lambda}$ and by $\S_{\B}$ the set of all states on $\B$. By virtue of Theorem \ref{exists} the set $\S_{\B}$ is non-empty.

Let $s$ be a state on $\B$. The natural injective $*$-homomorphism \eqref{hom} from $\B_\la$ to $\B$ maps $a$ to its equivalence class $[a]$ and therefore 
\begin{equation}
\B_\la\ni a\mapsto  s_\la(a):= s([a])\in\C
\label{om-la}
\end{equation}
is a continuous positive functional on $\B_\la$ such that $||s_\la||\leq 1$.

Let $\mathbf{1}_\la$ be the unit of $\B_\la$. Since all homomorphisms $\{\iota_{\la'\la}\}$ are unital the equivalence class $[\mathbf{1}_\la]$ is the unit of both $\B_0$ and $\B$. Thus by virtue of Lemma \ref{s-1}
\[
|s_\la(\mathbf{1}_\la)|= |s([\mathbf{1}_\la])|=1 =||\mathbf{1}_\la||,
\]
hence\footnote{If $\cal A$ is a unital $C^*$-algebra then the norm of its unit is equal to $1$ \cite{brat-rob}.} $||s_\la||=1$ and $s_\la$ is a state on $\B_\la$ i.e. $s_\la\in \S_\la$.    

Suppose that $\la'\geq\la$ and $a\in\B_\la$. Then
\[
(\Pi_{\la\la'}( s_{\la'}))(a)= s_{\la'}(\iota_{\la'\la}(a))= s([\iota_{\la'\la}(a)])= s([a])= s_{\la}(a)
\]
---here we used the fact that according to the definition of the inductive limit $[\iota_{\la'\la}(a)]=[a]$. Consequently,
\[
\Pi_{\la\la'}( s_{\la'})= s_\la
\]
which means that the net $( s_\la) $ defined by $ s\in\S_{\B}$ is an element of the projective limit $\S$ of $\{S_\la,\Pi_{\la\la'}\}_{\la\in\Lambda}$.
\end{proof}

In fact, the set $\S$ is not only non-empty but its elements are in one-to-one correspondence with the states on $\B$:

\begin{lm}
The map
\begin{equation}
\S_{\B}\ni s\to\sigma(s):=(s_\la) \in\S,
\label{s-s}
\end{equation}
where $s_\la$ is given by \eqref{om-la}, is a bijection.
\end{lm}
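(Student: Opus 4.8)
The plan is to prove that $\sigma$ is both injective and surjective; that $\sigma$ is well defined, i.e.\ that $\sigma(s)$ indeed lands in the projective limit $\S$, has already been established in the proof of the preceding lemma, so I would not repeat it.

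\emph{Injectivity.} First I would note that every element of the algebraic inductive limit $\B_0$ is of the form $[a]$ for some $\la\in\Lambda$ and some $a\in\B_\la$. Hence if $\sigma(s)=\sigma(s')$, that is $s_\la=s'_\la$ for every $\la$, then $s([a])=s_\la(a)=s'_\la(a)=s'([a])$ for every such class, so $s$ and $s'$ agree on the dense $*$-subalgebra $\B_0\subset\B=\overline{\B_0}$. Since states are continuous functionals, this forces $s=s'$ on all of $\B$, giving injectivity.

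\emph{Surjectivity.} This is the heart of the argument. Given a compatible net $(s_\la)\in\S$, I would construct a preimage by first defining a functional $s_0$ on $\B_0$ through
\[
s_0([a]):=s_\la(a),\qquad a\in\B_\la,
\]
and then extending it continuously to $\B$. The step I expect to be the crux is the well-definedness of $s_0$, i.e.\ its independence of the representative of the class $[a]$: if $a\in\B_\la$ and $a'\in\B_{\la'}$ satisfy $[a]=[a']$, then by definition of the equivalence relation there is $\la''\geq\la,\la'$ with $\iota_{\la''\la}(a)=\iota_{\la''\la'}(a')$, and the compatibility conditions $\Pi_{\la\la''}(s_{\la''})=s_\la$ and $\Pi_{\la'\la''}(s_{\la''})=s_{\la'}$, together with $\Pi=\iota^*$, yield
\[
s_\la(a)=s_{\la''}\big(\iota_{\la''\la}(a)\big)=s_{\la''}\big(\iota_{\la''\la'}(a')\big)=s_{\la'}(a').
\]
This is precisely where the projective-limit (compatibility) property of the net is used, and it is the only delicate point.

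Once $s_0$ is well defined, the remaining verifications are routine. Linearity and the $*$-property follow from the algebraic operations on $\B_0$ and the corresponding properties of each $s_\la$; positivity follows from $s_0([a]^*[a])=s_0([a^*a])=s_\la(a^*a)\geq 0$; and since each $s_\la$ is a state one has $|s_0([a])|=|s_\la(a)|\leq\|a\|=\|[a]\|$, so $\|s_0\|\leq 1$, while $s_0([\mathbf{1}_\la])=s_\la(\mathbf{1}_\la)=1$ by Lemma \ref{s-1} gives $\|s_0\|=1$. Thus $s_0$ is a state on the dense $*$-subalgebra $\B_0$, and its unique continuous extension $s$ to $\B=\overline{\B_0}$ is a state on $\B$, positivity and the unit normalization passing to the limit by continuity. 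By construction $s([a])=s_0([a])=s_\la(a)$, so $\sigma(s)=(s_\la)$ and $\sigma$ is onto. Combining the two parts shows $\sigma$ is a bijection.
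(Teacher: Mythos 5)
Your proof is correct and follows essentially the same route as the paper: injectivity via density of $\B_0$ in $\B$, and surjectivity by defining $s_0$ on representatives, bounding it, extending by continuity, and using Lemma \ref{s-1} on the unit to pin down the norm. The only difference is that you spell out the well-definedness of $s_0$ (using the compatibility of the net with the maps $\Pi_{\la\la'}=\iota^*_{\la'\la}$), a step the paper dismisses as clear --- your version is the more complete one.
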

\begin{proof}
We show first that the map \eqref{s-s} is surjective. Consider a net $(s_\la)\in\S$.  Let $b$ be an element of the algebraic inductive limit ${\cal B}_0$ of $\{\B_\la,\iota_{\la'\la}\}_{\la\in\Lambda}$. Then there exists $\la\in\Lambda$ and $a\in \B_\la$ such that $a$ is a representative of $b$ i.e. $b=[a]$. Define
\[
 s_0(b):= s_\la(a).
\]
It is clear that this definition does not depend on the choice of the representative $a$.

It is straightforward to check that $s_0$ is a linear functional on ${\cal B}_0$. Moreover
\begin{align*}
  s_0(b^*b)&=s_{\la}(a^*a)\geq 0,\\
  | s_0([b])|&=|s_\la(a)|\leq ||a||=||\,[b]\,||,
\end{align*}
which means that $ s_0$ can be unambiguously extended to a continuous positive linear functional $ s$ on $\B$ such that $||s||\leq 1$. 

Let $\mathbf{1}_\la$ be the unit of $\B_\la$. As we know already $[\mathbf{1}_\la]$ is the unit of both $\B_0$ and $\B$. Therefore
\[
|s([\mathbf{1}_\la])|=|s_0([\mathbf{1}_\la])|=|s_\la(\mathbf{1}_\la)|=1=||\,[\mathbf{1}_\la]\,||,
\]
where in the third step we used Lemma \ref{s-1}. Consequently, $||s||=1$ and  $s$ is a state on $\B$.  

Obviously, for every $\la\in \Lambda$ and for every  $a\in \B_\la$ 
\[
 s([a])= s_\la(a),
\]
which means that $\sigma(s)=(s_\la)$. Thus the map \eqref{s-s} is surjective.

Suppose that there exist $s,s'\in\S_{\B}$ such that  $\sigma(s)=\sigma(s')=(s_\la)\in \S$. Let $b$ be any element of $\B_0$ and let $a\in \B_\la$ be a representative of $b$. Then
\[
s(b)=s([a])=s_\la(a)=s'([a])=s'(b).
\]
Thus $s$ and $ s'$ coincide on $\B_0$ being a dense subset of $\B$. Hence $s=s'$ and the map \eqref{s-s} is injective.    
\end{proof}

By virtue of the lemma above we can identify the sets $\S_{\B}$ and $\S$.

Let us note finally that the projective limit $\D\subset \S=\S_{\B}$. Therefore $\D$ is non-empty if and only if there exists a state $ s$ on $\B$ such that for every $\la\in\Lambda$ the state $ s_\la$ on $\B_\la$ given by \eqref{om-la} is {\em normal}.

\subsection{Some remarks on the modified construction}

As stated in \cite{araki}
\begin{quote}
``(...) the theory which considers all states on\footnote{Here $\B(\h)$ denotes the $C^*$-algebra of all bounded operators on a separable Hilbert space $\h$.} $\B(\h)$ and the one which considers only the normal states are physically equivalent, and we can use either theory according to our convenience.''
\end{quote}
 It is clear that at the present moment we cannot state analogously that ``the quantum field theory which considers all states on $\B$ (i.e. elements of the projective limit $S$) and the one which considers only the states in $\D$ are physically equivalent'' because we do not know how ``large'' the limit $\D$ is in general. On the other hand we will argue that the space $\S$ is ``large enough'' to serve as the space of quantum states for the quantum field theory.

Let us recall that the idea of the projective construction of quantum states is to divide a (classical) field theory into many finite classical physical systems, to quantize each finite system separately and then ``to glue'' the resulting spaces of quantum states by means of the projective techniques into one meant to serve as the state space for the corresponding quantum field theory. As stated in the introduction the finite systems altogether should encompass all d.o.f. of the original phase space. Since each $\B_\la$ contains all observables of the system $\la$ it is reasonable to treat the inductive limit $\B$ of the family $\{\B_\la,\iota_{\la'\la}\}_{\la\in\Lambda}$ as the algebra of observables of the quantum theory \cite{kpt}. Since the limit $\S$ coincides with the set $\S_{\B}$ of all states on $\B$ it seems that it is ``large enough'' to be used as the state space  for the quantum theory.

Moreover, the limit $\S$ encompasses all quantum states of all quantized finite systems. Indeed, assume that $ s_\la$ is a state on $\B_\la$. This algebra is a $C^*$-subalgebra of $\B$ and therefore by virtue of Theorem \ref{ext} there exists a state $ s\in\S_{\B}=\S$ such that its restriction to $\B_\la$ coincides with $ s_\la$. Thus in particular, the space $\S$ encompasses all normal states used to construct the projective family $\{\D_\la,\pi_{\la\la'}\}_{\la\in\Lambda}$.

States on the algebra $\B_\la$ which are not normal are not easy to describe and handle in comparison to normal ones. Fortunately, for all practical purposes connected to experiments we can  work exclusively with normal states even if the limit $\D$ is ``too small'' or empty. The reason for this is the following. In every experiment we can measure only a {\em finite} number of observables $\{a_1,\ldots,a_n\}$  associated with the system $\la$. Assume that the quantum field is in a state $s\in\S$. Then the experiment can be described in terms of (expectation) values of the state $s$ on a {\em finite} number of operators $\{b_1,\ldots,b_m\}\subset \B_\la$ being functions\footnote{For example, $b_j$ may be the orthogonal  projection on a subspace of $\h_\la$ spanned by a finite number of eigenvectors of $a_k$.} of $\{a_1,\ldots,a_n\}$. For each $b_j$  
\[
s([b_j])=s_\la(b_j),
\]
where $[b_j]\in \B$ and $s_\la\in \S_\la$ is an element of (the net) $s$. Suppose that $s_\la$ is not normal. Since the set $\D_\la$ is $*$-weak dense in $\S_\la$ for every $\epsilon>0$ there exists $\rho_\la\in \D_\la$ such that
\[
|s_\la(b_j)-\rho_\la(b_j)|<\epsilon
\]
for every $j\in\{1,\ldots,m\}$. Thus we can always find a normal state $\rho_\la$ which cannot be experimentally distinguished from the state $s_\la$ since there are no experiments of perfect accuracy.

All these facts mean that the question whether $\D$ is non-empty or ``large enough'' is now not very relevant---by using the algebraic states instead of merely the normal ones we obtained the physically valid space $\S$ of quantum states for the quantum field theory, but still for all practical purposes we can use exclusively the normal states.

\subsection{Remarks on a vacuum state}

In case of a physical system with infinitely many degrees of freedom, the observable algebra $\B$ and its space of states $\S= \S_B$ can describe physical situations which go far beyond the framework of quantum field theory. Indeed, the latter corresponds to the {\em vacuum sector} of the pair $(\B , \S)$ and can be obtained {\em via} the Gelfand-Naimark-Segal (GNS) construction. This construction also provides the final Hilbert space of the theory.

For this purpose we need a vacuum state $V \in \S$. Here, we briefly sketch  the idea on how to construct this state (cf. \cite{kpt, kpt-qed}).

As stated before, we treat each of the ``truncated'' theories on the level $\lambda \in \Lambda$ as an approximation of the full theory. In particular, each of them should be equipped with a  ``truncated'' Hamiltonian operator. Denote by $v_{\lambda^\prime} \in \D_{\lambda^\prime} \subset \S_{\lambda^\prime}$  the ground state of this operator and by $v_{\lambda  \lambda^\prime}\in S_{\la}$ its projection on an arbitrary subsystem $\lambda \le \lambda^\prime$. The entire difficulty in the construction of the quantum field theory consists in proving the existence of the limit (in the *-weak topology)
\[
V_{\lambda} = \lim_{\lambda^\prime} v_{\lambda  \lambda^\prime} \in \S_\lambda.
\]
If the limit $V_\la$ exists for every $\lambda$, then the net $(V_\lambda) $ is automatically compatible with the projections $\{\Pi_{\la\la'}\}$ which means that $(V_\lambda)  $ belongs to the projective limit $\S$ and thereby it defines a state $V$ on the algebra $\B$. Being the limit of the approximate vacuum states, it is a natural candidate for a non-perturbative vacuum of the continuum theory and the starting point for the GNS construction of the Hilbert
space of its quantum states.

We stress that each $v_\lambda \in \D_\lambda$ is normal because it is a pure state corresponding to the ground state of the ``truncated'' Hamiltonian operator which approximates the full Hamiltonian of the theory. However, the limit $V_{\lambda}$ (if there is any) does not need to be normal. This is where the construction presented above enters the game. 

The existence of the above limit of vacuum states may be extremely difficult
to prove. A realistic attitude consists, therefore,
in a detailed analysis of the theory on the level of its lattice approximations: if the continuum limit of these approximations
does exist, the numerical results obtained on the level of a
``sufficiently late'' approximation $\lambda$ should provide a ``sufficiently good'' approximation of the true
physical theory.

\section{Summary}

In this paper we removed the possible flaw of the space $\D$ meant to serve as the state space for a quantum field theory---this space is defined as a projective limit and therefore it may be empty or ``too small'' to serve this purpose. This flaw was removed by a natural extension of the projective family $\{\D_\la,\pi_{\la\la'}\}_{\la\in\Lambda}$ defining the limit $\D$---each space $\D_\la$ of normal states  on the algebra $\B_\la$ of quantum observables of the finite system $\la$ was extended to the space $\S_\la$ of all algebraic states on the algebra. The resulting projective family $\{\S_\la,\Pi_{\la\la'}\}_{\la\in\Lambda}$ possesses the non-empty projective limit $\S$. Moreover, the limit $\S$ contains the limit $\D$ as its proper subset and coincides with the set of all algebraic states on the algebra $\B$ of observables of the quantum field theory.

\paragraph{Acknowledgment} We are very grateful to Prof. Stanis{\l}aw L. Woronowicz, Pawe{\l} Kasprzak and Piotr So{\l}tan for valuable discussions and help. This research was supported by Narodowe Centrum Nauki, Poland (grant 2016/21/B/ST1/00940).

\appendix

\section{A family of factorized Hilbert spaces in \cite{q-nonl} \label{ffHs-old}}

In \cite{q-nonl} the construction of the projective family $\{{\cal D}_\lambda,\pi_{\lambda\lambda'}\}_{\la\in\Lambda}$ is based on a family of Hilbert spaces $\{\h_\lambda,\tilde{\h}_{\lambda'\lambda}\}_{\la\in\Lambda}$---each ${\cal D}_\lambda$ is a space of density operators on $\h_\lambda$, and for each pair $\lambda'\geq\lambda$ the projection $\pi_{\lambda\lambda'}:{\cal D}_{\lambda'}\to\D_\lambda$ is defined as a partial trace with respect to the Hilbert space $\tilde{\h}_{\lambda'\lambda}$. In this appendix we will show that the set $\{\h_\lambda,\tilde{\h}_{\lambda'\lambda}\}_{\la\in\Lambda}$ defined in \cite{q-nonl} form in a natural way a family of factorized Hilbert spaces and that this family provides the same projective family $\{{\cal D}_\lambda,\pi_{\lambda\lambda'}\}_{\la\in\Lambda}$.

On the other hand, the constructions described in \cite{kpt} and \cite{q-stat} are particular examples of the construction presented in \cite{q-nonl}. Thus the result to be achieved here means that the families $\{{\cal D}_\lambda,\pi_{\lambda\lambda'}\}_{\la\in\Lambda}$ obtained in these three papers can be seen as originating from families of factorized Hilbert spaces.

\subsection{Preliminaries}

Here we present some notions and facts defined and proven in \cite{q-nonl}.

In \cite{q-nonl} the directed set $\Lambda$ is a subset of a Cartesian product of two directed sets $\hat{\mathbf{F}}$ and $\mathbf{K}$. An element $\hat{F}\in \hat{\mathbf{F}}$ describes a finite number of momentum d.o.f. and an element $K\in\mathbf{K}$ does a finite number of configurational d.o.f. of a field theory. Thus each $\lambda\equiv(\hat{F},K)\in \Lambda$ describes a finite physical system.

Given system $\lambda\equiv(\hat{F},K)$, its configurational space is a (finite dimensional) linear space denoted by $Q_K$. For every pair $\lambda'\equiv(\hat{F}',K')\geq\lambda\equiv(\hat{F},K)$ there exist a linear surjection
\[
\pr_{KK'}:Q_{K'}\to Q_K
\]
and a linear injection
\[
\omega_{\lambda'\lambda}: Q_K\to Q_{K'}
\]
such that
\begin{align}
\pr_{KK'}\circ\omega_{\lambda'\lambda}&=\id,\label{po-id}\\
Q_{K'}&=\ker\pr_{KK'}\oplus \omega_{\lambda'\lambda}(Q_K).\label{QK'-dec}
\end{align}
If for some $K$ and $K'$ $Q_K=Q_{K'}$ then $\pr_{KK'}=\id$.

For every triplet $\lambda''\equiv(\hat{F}'',K'')\geq\lambda'\equiv(\hat{F}',K')\geq\lambda\equiv(\hat{F},K)$
\begin{align}
\pr_{KK''}&=\pr_{KK'}\circ\pr_{K'K''},\label{ppp}\\
\omega_{\lambda''\lambda}&=\omega_{\lambda''\lambda'}\circ\omega_{\lambda'\lambda}.
\label{ooo}
\end{align}
The spaces $Q_{K''}$ and $\ker\pr_{KK''}$ can be decomposed in the following way
\begin{align}
Q_{K''}&= \ker\pr_{K'K''}\oplus\omega_{\lambda''\lambda'}(\ker\pr_{KK'})\oplus\omega_{\lambda''\lambda}(Q_K),\label{QK''-dec}\\
\ker\pr_{KK''}&=\ker\pr_{K'K''}\oplus\omega_{\lambda''\lambda'}(\ker\pr_{KK'}).\label{ker-dec}
\end{align}

For each $\lambda$ its configurational space $Q_K$ is equipped with a measure $d\mu_\lambda$. For every pair $\lambda'\geq\lambda$ the space $\ker\pr_{KK'}$ is equipped with a measure $d\tilde{\mu}_{\lambda'\lambda}$ such that
\begin{equation}
d\mu_{\lambda'}=d\tilde{\mu}_{\lambda'\lambda}\times \omega_{\lambda'\lambda*}(d\mu_\lambda).
\label{mmm}
\end{equation}
For every triplet $\lambda''\geq\lambda'\geq\lambda$
\begin{align}
d\mu_{\lambda''}&=d\tilde{\mu}_{\lambda''\lambda'}\times\omega_{\lambda''\lambda'*}(d\tilde{\mu}_{\lambda'\lambda})\times\omega_{\lambda''\lambda'*}(d\mu_{\lambda}),\label{m1}\\
d\tilde{\mu}_{\lambda''\lambda'}&=d\tilde{\mu}_{\lambda''\lambda'}\times\omega_{\lambda''\lambda'*}(d\tilde{\mu}_{\lambda'\lambda})\label{m2}.
\end{align}

A Hilbert space of a system $\lambda$ is defined as a space of square integrable functions on $Q_K$:
\[
\h_\lambda:=L^2(Q_K,d\mu_\lambda).
\]
Moreover, for every $\lambda'\geq\lambda$
\[
\tilde{\h}_{\lambda'\lambda}:=L^2(\ker\pr_{KK'},d\tilde{\mu}_{\lambda'\lambda}).
\]
If $\ker\pr_{KK'}=\{0\}\in Q_{K'}$ then
\[
\int_{\ker\pr_{KK'}}f \,d\tilde{\mu}_{\lambda'\lambda}:=f(0)\,\xi\in\C
\]
for some real number $\xi>0$ independent of $f$. Then $\tilde{\h}_{\lambda'\lambda}$ is naturally isomorphic to the set $\C$ of complex numbers equipped with a scalar product
\[
\scal{z}{z'}:=\bar{z}z'\xi.
\]

\subsection{Construction of the family of factorized Hilbert spaces}

Let $\lambda'\equiv (\hat{F}',K')\geq\lambda\equiv (\hat{F},K)$. By virtue of injectivity of $\omega_{\lambda'\lambda}$ and the decomposition \eqref{QK'-dec} the following map
\begin{equation}
\ker\pr_{KK'}\times Q_K\ni(q',q)\mapsto\phi_{\la'\la}(q',q):=q'+\omega_{\la'\la}(q)\in Q_{K'}
\label{phi-ll}
\end{equation}
is bijective. It follows from \eqref{mmm} that
\[
d\mu_{\lambda'}=\phi_{\lambda'\lambda*}(d\tilde{\mu}_{\lambda'\lambda}\times d\mu_\lambda).
\]
Therefore
\begin{equation}
\Phi_{\lambda'\lambda}:=\phi^*_{\lambda'\lambda}:\h_{\lambda'}\to\tilde{\h}_{\lambda'\lambda}\ot \h_\lambda
\label{Phi-ll}
\end{equation}
is a Hilbert space isomorphism.

Assume now that $\lambda'\equiv(\hat{F}',K')=\lambda\equiv(\hat{F},K)$. Then $\pr_{KK}=\id$ and  consequently $\ker\pr_{KK}=\{0\}$. This fact together with \eqref{po-id} mean that $\omega_{\lambda\lambda}=\id$. Therefore 
\[
\phi_{\la\la}(0,q)=0+q=q,
\]
hence for every function $\psi$ on $Q_K$  
\[
(\phi^*_{\la\la}\psi)(0,q)=\psi(q)=\tilde{1}(0)\,\psi(q),
\]
where $\tilde{1}$ is a function on $\ker\pr_{KK}$ of a value equal $1$. Thus
\[
\phi^*_{\la\la}\psi=\tilde{1}\,\psi
\]
Therefore for every $\psi\in\h_\la$ 
\begin{equation}
\Phi_{\lambda\lambda}(\psi)=\tilde{1}\ot\psi.
\label{Phi-1}
\end{equation}
Now Equation \eqref{mmm} reads
\[
d\mu_{\lambda}=d\tilde{\mu}_{\lambda\lambda}\times d\mu_{\lambda}
\]
---it follows from it that the measure $d\tilde{\mu}_{\lambda\lambda}$ is given by $\xi=1$. Therefore $\tilde{1}$ is a normed element of a one-dimensional Hilbert space $\h_{\lambda\lambda}$. This fact and \eqref{Phi-1} mean that $\Phi_{\lambda\lambda}$ is trivial.

The decomposition \eqref{QK''-dec} allows us to define on  $\ker \pr_{K'K''}\times \ker \pr_{KK'}\times Q_K$ the following two maps valued in $Q_{K''}$:
\begin{align*}
&(q'',q',q)&&\mapsto&&(q''+\omega_{\lambda''\lambda'}(q'),q)&&\mapsto&&q''+\omega_{\lambda''\lambda'}(q')+\omega_{\lambda''\lambda}(q),\\
&(q'',q',q)&&\mapsto&&(q'',q'+\omega_{\lambda'\lambda}(q))&&\mapsto&&q''+\omega_{\lambda''\lambda'}(q')+\omega_{\lambda''\lambda'}(\omega_{\lambda'\lambda}(q)).
\end{align*}
 The identity \eqref{ooo} guarantees that the maps coincide. This fact together with \eqref{ker-dec} is equivalent to the commutativity of the following diagram:
\[
\begin{CD}
Q_{K''} @<\phi_{\lambda''\lambda}< < \ker{\pr_{KK''}}\times Q_K\\
@AA\phi_{\lambda''\lambda'}A       @AA\phi_{\lambda''\lambda'\la}\times\id A \\
\ker\pr_{K'K''}\times Q_{K'}@< \id\times\phi_{\lambda'\lambda}< <  \ker \pr_{K'K''}\times \ker \pr_{KK'}\times Q_K
\end{CD},
\]
where $\phi_{\la''\la'\la}$ is a map defined as follows: 
\[
\ker \pr_{K'K''}\times \ker \pr_{KK'}\ni(q'',q')\mapsto\phi_{\la''\la'\la}(q'',q'):=q''+\omega_{\la''\la'}(q')\in \ker\pr_{KK''}.
\] 
This map is a bijection by virtue of the decomposition \eqref{ker-dec} and of injectivity of $\omega_{\la''\la'}$. Thus all the maps appearing in the diagram are bijections.

The diagram above and the properties \eqref{mmm}, \eqref{m1} and \eqref{m2} allow us to construct another commutative diagram:
\[
\begin{CD}
d\mu_{\lambda''} @<\phi_{\lambda''\lambda *}< < d\tilde{\mu}_{\lambda''\lambda}\times d\mu_\lambda\\
@AA\phi_{\lambda''\lambda'*}A       @AA\phi_{\la''\la'\la *}\times\id_* A \\
d\tilde{\mu}_{\lambda''\lambda'}\times d\mu_{\lambda'}@< \id_*\times \phi_{\lambda'\lambda *}< <  d\tilde{\mu}_{\lambda''\lambda'}\times d\tilde{\mu}_{\lambda'\lambda}\times d\mu_\lambda
\end{CD}\,\,.
\]

It follows from the two diagrams above that the following one
\[
\begin{CD}
\h_{\lambda''} @>\Phi_{\lambda''\lambda}>> \tilde{\h}_{\lambda''\lambda}\ot\h_\lambda\\
@VV\Phi_{\lambda''\lambda'}V       @VV\phi^*_{\lambda'' \lambda'\lambda}\ot\id V \\
\tilde{\h}_{\lambda''\lambda'}\ot\h_{\lambda'}@>\id\ot\Phi_{\lambda'\lambda}>>  \tilde{\h}_{\lambda''\lambda'}\ot\tilde{\h}_{\lambda'\lambda}\ot\h_\lambda
\end{CD}
\]
is commutative also. It is clear that $\phi^*_{\lambda'' \lambda'\lambda}:\tilde{\h}_{\lambda''\lambda}\to \tilde{\h}_{\lambda''\lambda'}\ot\tilde{\h}_{\lambda'\lambda}$ is a Hilbert space isomorphism. Therefore we define
\begin{equation}
\Phi_{\lambda'' \lambda'\lambda}:=\phi^*_{\lambda'' \lambda'\lambda}.
\label{Phi-lll}
\end{equation}

Suppose now that $\lambda'\equiv(\hat{F}',K')=\lambda\equiv(\hat{F},K)$. Then $\ker\pr_{KK}=\{0\}$ and
\[
\phi_{\lambda''\lambda\lambda}(q'',0)=q''+\omega_{\la''\la}(0)=q''.
\]
Consequently, for every function $\psi$ on $\ker\pr_{KK''}$  
\[
(\phi^*_{\lambda''\lambda\lambda}\psi)(q'',0)=\psi(q'')=\psi(q'')\,\tilde{1}(0),
\]
where now $\tilde{1}$ is a function on $\ker\pr_{KK}$ of a value equal $1$. Thus
\[
\phi^*_{\lambda''\lambda\lambda}\psi=\psi\,\tilde{1}.
\]
Since $d\tilde{\mu}_{\lambda\lambda}$ is given by $\xi=1$ the function $\tilde{1}$ is a normed element of one-dimensional Hilbert space $\tilde{\h}_{\la\la}$. Thus
\[
\Phi_{\lambda''\lambda\lambda}(\psi)=\psi\ot \tilde{1},
\]
which means that the map is trivial.

Assume now that $\lambda''\equiv(\hat{F}'',K'')=\lambda'\equiv(\hat{F}',K')$. Then $\ker\pr_{K'K'}=\{0\}$ and $\omega_{\la'\la'}=\id$ hence 
\[
\phi_{\lambda'\lambda'\lambda}(0,q')=0+\omega_{\la'\la'}(q')=q'.
\]
Consequently, for every function $\psi$ on $\ker\pr_{KK'}$  
\[
(\phi^*_{\lambda'\lambda'\lambda}\psi)(0,q')=\psi(q')=\tilde{1}(0)\,\psi(q'),
\]
where now $\tilde{1}$ is a function on $\ker\pr_{K'K'}$ of a value equal $1$. Thus
\[
\phi^*_{\lambda'\lambda'\lambda}\psi=\tilde{1}\,\psi.
\]
Since the function $\tilde{1}$ is a normed element of one-dimensional Hilbert space $\tilde{\h}_{\la'\la'}$ the map $\Phi_{\lambda'\lambda'\lambda}$ is trivial.

We conclude that the family $\{\h_\lambda,\tilde{\h}_{\lambda'\lambda}\}_{\la\in\Lambda}$ of Hilbert spaces defined in \cite{q-nonl} equipped with the maps \eqref{Phi-ll} and \eqref{Phi-lll} form a family of factorized Hilbert spaces.

Let us show finally that this family of factorized Hilbert spaces provides the same projective family $\{\D_\la,\pi_{\la\la'}\}_{\la\in\Lambda}$ as in \cite{q-nonl}. To reach the goal it is enough to prove that projections $\{\pi_{\la\la'}\}$ defined by Equations \eqref{pi}, \eqref{iota}, \eqref{Phi-ll} and \eqref{phi-ll}  coincide with those defined in \cite{q-nonl}.    

Consider a pair $\lambda'\equiv(\hat{F}',K')\geq\lambda\equiv(\hat{F},K)$. As stated in Section \ref{hit} the projection $\pi_{\la\la'}$ defined by \eqref{pi} and \eqref{iota} maps a density operator $\rho$ on $\h_{\la'}$ to a density operator on $\tilde{\h}_{\la'\la}\ot\h_\la$: 
\[
\rho\mapsto \Phi_{\la'\la}\circ\rho\circ\Phi_{\la'\la}^{-1}
\]
and then evaluates the partial trace with respect to $\tilde{\h}_{\la'\la}$:
\[
\Phi_{\la'\la}\circ\rho\circ\Phi_{\la'\la}^{-1}\mapsto\tr_{\tilde{\h}_{\la'\la}}\big(\Phi_{\la'\la}\circ\rho\circ\Phi_{\la'\la}^{-1}\big) 
\]     
Let 
\[
\h_{\la'\la}:=L^2(\omega_{\la'\la}(Q_K),\omega_{\la'\la*}d\mu_\la).
\]
As shown in \cite{q-nonl}
\begin{equation}
\h_{\la'}=\tilde{\h}_{\la'\la}\ot\h_{\la'\la}.
\label{fact}
\end{equation}
If $\Phi_{\la'\la}$ is given by \eqref{Phi-ll} and \eqref{phi-ll} then
\[
\Phi_{\la'\la}=\id\ot\omega^*_{\la'\la},
\]
where the above tensor product is defined with respect to the factorization \eqref{fact} and $\omega^*_{\la'\la}$ is understood here as an isomorphism from $\h_{\la'\la}$ to $\h_\la$. Thus
\[
\Phi_{\la'\la}(\h_{\la'})=\tilde{\h}_{\la'\la}\ot\omega^*_{\la'\la}(\h_{\la'\la})=\tilde{\h}_{\la'\la}\ot\h_{\la}.
\]

Therefore the projection $\pi_{\la\la'}$ can be seen equivalently as a map which first evaluates the partial trace of $\rho$ with respect to $\tilde{\h}_{\la'\la}$ and the factorization \eqref{fact}:    
\[
\rho\mapsto \tr_{\tilde{\h}_{\la'\la}}\rho,
\]
and then maps the resulting density operator on $\h_{\la'\la}$ to one on $\h_{\la}$ by means of the isomorphism $\omega^*_{\la'\la}$:
\[
\tr_{\tilde{\h}_{\la'\la}}\rho\mapsto \omega^*_{\la'\la}\circ\big(\tr_{\tilde{\h}_{\la'\la}}\rho\big)\circ \omega^{*-1}_{\la'\la}.
\]   
This is exactly how the projection $\pi_{\la\la'}$ is defined in \cite{q-nonl}.


\begin{thebibliography}{10}
\providecommand{\url}[1]{\texttt{#1}}
\providecommand{\urlprefix}{URL }
\expandafter\ifx\csname urlstyle\endcsname\relax
  \providecommand{\doi}[1]{doi:\discretionary{}{}{}#1}\else
  \providecommand{\doi}{doi:\discretionary{}{}{}\begingroup
  \urlstyle{rm}\Url}\fi
\providecommand{\eprint}[2][]{\url{#2}}

\bibitem{kpt}
Kijowski J, 1977 Symplectic geometry and second quantization.
\newblock \textit{Rep. Math. Phys.} \textbf{11} 97--109.

\bibitem{proj-lt-I}
Lan\'ery S, Thiemann T, 2017 Projective Limits of State Spaces I. Classical
  Formalism.
\newblock \textit{J. Geom. Phys.} \textbf{111} 6--39.
\newblock {\tt arXiv:1411.3589}.

\bibitem{non-comp}
Oko{\l}\'ow A, 2009 Quantization of diffeomorphism invariant theories of
  connections with a non-compact structure group---an example.
\newblock \textit{Comm. Math. Phys.} \textbf{289} 335--382.
\newblock {\tt arXiv:gr-qc/0605138}.

\bibitem{q-nonl}
Oko{\l}\'ow A, 2013 Construction of spaces of kinematic quantum states for
  field theories via projective techniques.
\newblock \textit{Class. Quant. Grav.} \textbf{30} 195003.
\newblock {\tt arXiv:1304.6330}.

\bibitem{q-stat}
Oko{\l}\'ow A, 2014 Kinematic quantum states for the Teleparallel Equivalent of
  General Relativity.
\newblock \textit{Gen. Rel. Grav.} \textbf{46} 1653.
\newblock {\tt arXiv:1304.6492}.

\bibitem{proj-lt-II}
Lan\'ery S, Thiemann T, 2017 Projective Limits of State Spaces II. Quantum
  Formalism.
\newblock \textit{J. Geom. Phys.} \textbf{116} 10--51.
\newblock {\tt arXiv:1411.3590}.

\bibitem{proj-lqg-I}
Lan\'ery S, Thiemann T, 2016 Projective Loop Quantum Gravity I. State Space.
\newblock \textit{J. Math. Phys.} \textbf{57} 122304.
\newblock {\tt arXiv:1411.3592}.

\bibitem{sl-phd}
Lan\'ery S, 2015 \textit{Projective state spaces for theories of connections}.
\newblock Ph.D. thesis, Friedrich-Alexander University Erlangen-N\"urnberg.

\bibitem{empty-0}
Waterhouse W~C, 1972 An Empty Inverse Limit.
\newblock \textit{Proc. Amer. Math. Soc.} \textbf{36} 618.

\bibitem{empty}
Deo S, Awasthi V~V, 2009 An inverse system of nonempty objects with empty
  limit.
\newblock \textit{Indian Jour. of Math.} \textbf{51} 689--697.
\newblock {\tt arXiv:0908.2168}.

\bibitem{proj-lt-IV}
Lan\'ery S, Thiemann T, 2015 Projective Limits of State Spaces IV. Fractal
  Label Sets {\tt arXiv:1510.01926}.

\bibitem{os}
Oko{\l}\'ow A, \'Swie\.zewski J, 2012 Hamiltonian formulation of a simple theory
  of the teleparallel geometry.
\newblock \textit{Class. Quant. Grav.} \textbf{29} 045008.
\newblock {\tt arXiv:1111.5490}.

\bibitem{lqg-tens}
Oko{\l}\'ow A, 2017 Kinematic projective quantum states for Loop Quantum
  Gravity coupled to tensor fields.
\newblock \textit{J. Math. Phys.} \textbf{58} 042302.
\newblock {\tt arXiv:1601.05707}.

\bibitem{constr-dpg}
Oko{\l}\'ow A, 2017 Constrained projective quantum states for the degenerate
  Pleba\'nski gravity {\tt arXiv:1703.07426}.

\bibitem{a-var-1}
Ashtekar A, 1986 New Variables for Classical and Quantum Gravity.
\newblock \textit{Phys. Rev. Lett.} \textbf{57} 2244.

\bibitem{a-var-2}
Ashtekar A, 1987 A New Hamiltonian Formulation of General Relativity.
\newblock \textit{Phys. Rev. D} \textbf{36} 1587.

\bibitem{al-hoop}
Ashtekar A, Lewandowski J, 1994 Representation theory of analytic holonomy
  $C^\star$-algebras.
\newblock In Baez J (editor) \textit{Knots and quantum gravity}. Oxford
  University Press, Oxford.
\newblock {\tt arXiv:gr-qc/9311010}.

\bibitem{prof-gr}
Ribes L, Zalesskii P, 2010 \textit{Profinite groups}.
\newblock Springer-Verlag, Berlin Heidelberg.

\bibitem{brat-rob}
Bratteli O, Robinson D~W, 2010 \textit{Operator algebras and quantum
  statistical mechanics}, vol.~1.
\newblock Springer, Berlin Heidelberg.

\bibitem{araki}
Araki H, 1999 \textit{Mathematical Theory of Quantum Fields}.
\newblock Oxford University Press, New York.

\bibitem{kpt-qed}
Kijowski J, Rudolph G, Thielmann A, 1997 Algebra of Observables and Charge
  Superselection Sectors for QED on the Lattice.
\newblock \textit{Comm. Math. Phys.} \textbf{188} 535--564.

\end{thebibliography}

\end{document}